\definecolor{mygray}{gray}{0.9}
\newcounter{Counter}
\newtheorem{Theorem}[Counter]{Theorem}
\newtheorem{Lemma}[Counter]{Lemma}
\newtheorem{Corollary}[Counter]{Corollary}
\newtheorem{Observation}[Counter]{Observation}
\newcommand{\toffoli}{\mathrm{TOFFOLI}}
\newcommand{\cnot}{\mathrm{CNOT}}
\newcommand{\nnot}{\mathrm{NOT}}
\title{Explicit lower bounds on strong simulation of quantum circuits in terms of $T$-gate count}
\author[1,2]{ Cupjin Huang}
\author[1,3]{Michael Newman}
\author[1]{Mario Szegedy}
\affil[1]{\small \textit{Aliyun Quantum Laboratory, Alibaba Group, Bellevue, WA 98004, USA}}
\affil[2]{\small \textit{Department of Computer Science, University of Michigan, Ann Arbor, MI 48109, USA}}
\affil[3]{\small \textit{Department of Physics, Duke University, Durham, NC, 27708, USA}}
\date{}
\begin{document}
\maketitle
\begin{abstract}
We investigate Clifford+$T$ quantum circuits with a small number of $T$-gates.  Using the sparsification lemma, we identify time complexity lower bounds in terms of $T$-gate count below which a strong simulator would improve on state-of-the-art $3$-SAT solving.
\end{abstract}


\section{Results}

Recent results \cite{bravyi2016improved,  bravyi2018simulation} have shown that a quantum circuit can be strongly simulated in time $O^*(2^{0.47N})$, where 
 $N$ is the number of $T = Z^{1/4}$-gates in an otherwise Clifford circuit\footnote{With slight abuse of notation, we also allow the inverse of the $T$-gate ($T^\dag=P^\dag T$) in the gate set and define the $T$-count of the circuit to be the number of $T$ and $T^\dag$ gates altogether.}. As Clifford + $T$ gates form a universal gate set, this simulation method yields a substantial speed-up on circuits that are predominantly Clifford. We show the following~\footnote{We note that results similar to Theorem 1 have been obtained in~\cite{morimae2019fine} independently.}.

\begin{Theorem}\label{theorem:main}
Assuming the Exponential Time Hypothesis (ETH), there is an $\epsilon>0$ such that any strong simulation that can determine if $\langle 0|C|0|\rangle\neq 0$ of a polynomial-sized quantum circuit $C$ formed from the Clifford+$T$ gate set
with $N$ $T$-gates takes time at least $2^{\epsilon N}$. 
\end{Theorem}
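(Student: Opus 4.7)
My plan is to reduce $3$-SAT to deciding whether $\langle 0|C|0\rangle \neq 0$ for a Clifford$+T$ circuit $C$ whose $T$-count is \emph{linear} in the number of variables of the underlying $3$-SAT instance. Achieving a linear $T$-count is the main point: a generic $3$-CNF on $n$ variables can have $\Theta(n^3)$ clauses, so encoding it naively with a constant number of $T$-gates per clause is too wasteful. Therefore I would first invoke the sparsification lemma of Impagliazzo--Paturi--Zane: for any $\eta>0$, an arbitrary $3$-SAT instance on $n$ variables can be replaced by at most $2^{\eta n}$ $3$-SAT instances, each with at most $c_\eta\, n$ clauses, such that the original is satisfiable iff at least one sparse instance is. This is the ingredient that lets us trade the structural density of the formula for a tiny multiplicative overhead in the running time.

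\textbf{Circuit construction.} Given a sparse $3$-CNF $\phi$ with $n$ variables and $m = O(n)$ clauses, I would realize the unitary
\[
C_\phi \;=\; (I \otimes X)\,(H^{\otimes n} \otimes I)\, U_\phi\, (H^{\otimes n} \otimes I),
\]
where $U_\phi\,|x,y\rangle = |x,\, y\oplus \phi(x)\rangle$ is the standard reversible oracle for $\phi$ (computed on ancillas that are returned to $|0\rangle$). A direct calculation gives $\langle 0^{n+1+\mathrm{anc}}|C_\phi|0^{n+1+\mathrm{anc}}\rangle = |\{x:\phi(x)=1\}|/2^n$, which is nonzero iff $\phi$ is satisfiable. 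To build $U_\phi$, I would (i) use one $\toffoli$ per clause, flanked by $\nnot$ gates to encode negated literals, to write each clause's value into a dedicated ancilla; (ii) combine the $m$ clause values into the output qubit with a tree of $m-1$ $\toffoli$ gates; and (iii) uncompute the clause ancillas with another $m$ $\toffoli$ gates. Since each $\toffoli$ has a standard Clifford$+T$ decomposition using a constant number of $T$-gates, the total is $N = O(m) = O(n)$ $T$-gates, together with polynomially many Clifford gates and ancillas.

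\textbf{Putting it together.} A hypothetical strong simulator running in time $2^{\epsilon N}$ would decide whether $\langle 0|C_{\phi_i}|0\rangle \neq 0$ for each sparsified instance $\phi_i$ in time $2^{\epsilon N} = 2^{O(\epsilon) n}$; iterating over all $2^{\eta n}$ sparse instances would decide the original $3$-SAT problem in time $2^{(\eta + O(\epsilon)) n}$. If $\epsilon$ and $\eta$ are chosen small enough relative to the ETH constant $\delta_0$ (for which $3$-SAT requires time $2^{\delta_0 n}$), this contradicts ETH, and no such $\epsilon$ can exist.

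\textbf{Anticipated obstacle.} The conceptual steps are all standard, so the only real difficulty is quantitative bookkeeping: pinning down the precise constant $c_1$ in $N \leq c_1 n$ (which depends on $c_\eta$ and on the $T$-count of the $\toffoli$ decomposition one chooses), and then selecting $\eta$ and $\epsilon$ so that $\eta + c_1 \epsilon < \delta_0$. Matching these constants cleanly is the step on which the final $\epsilon$ in the statement of Theorem~\ref{theorem:main} depends, but it is ultimately a routine optimization rather than a new idea.
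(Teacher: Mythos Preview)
Your proposal is correct and follows essentially the same route as the paper: sparsify the $3$-SAT instance (the paper packages this step into Lemma~\ref{ethcons}, stating ETH directly in terms of the formula length $L$), build a reversible oracle with $O(L)$ Toffoli gates, sandwich by Hadamards so that $\langle 0|C_\phi|0\rangle$ counts satisfying assignments, and decompose each Toffoli into a constant number of $T$-gates. The only small slips are that a $3$-literal clause needs two Toffoli rather than one and that the AND-tree ancillas must also be uncomputed, but these only change constants and leave your $N=O(n)$ conclusion intact.
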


We review the ETH in the next section, but for explicit constants, we have the following theorem.

\begin{Theorem} \label{thm:main1}
Assume that there exists a strong simulator that, for any Clifford+$T$ circuit with $N$ $T$-gates, can determine if $\langle 0|C|0\rangle\neq 0$ in time $O(2^{2.2451\times 10^{-8}N})$. Then this would improve on the current state-of-the-art {\rm 3-SAT} solver by achieving an $O(1.3^{n})$ runtime for $m={\rm poly}(n)$, where $n$ denotes the number of variables of the {\rm 3-SAT} instance and $m$ denotes the number of clauses.
\end{Theorem}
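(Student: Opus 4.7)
The plan is to reduce 3-SAT to strong simulation via a chain of two reductions: first, apply the Sparsification Lemma of Impagliazzo, Paturi, and Zane to replace a general 3-SAT instance by a collection of sparse instances with a linear number of clauses; second, encode each sparse instance into a Clifford+$T$ circuit whose vacuum-to-vacuum amplitude detects satisfiability; and third, balance the two overheads to extract the explicit constant $2.2451 \times 10^{-8}$.

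For the sparsification step, for any $\delta > 0$ a 3-SAT instance $\phi$ with $n$ variables and $m = \mathrm{poly}(n)$ clauses can be expressed as the disjunction of at most $2^{\delta n}$ 3-SAT instances $\phi_i$, each on at most $n$ variables and at most $C(\delta)\cdot n$ clauses, where $C(\delta)$ is an explicit constant depending only on $\delta$. The reduction itself runs in time $\mathrm{poly}(n) \cdot 2^{\delta n}$, and $\phi$ is satisfiable iff some $\phi_i$ is.

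For the quantum reduction, given a sparse 3-SAT instance $\psi$ with $n$ variables and $m' \leq C(\delta) n$ clauses, I build a Clifford+$T$ circuit $C_\psi$ on $n$ input qubits, one output qubit, and $O(m')$ ancillas as follows. Apply $H^{\otimes n}$ to the inputs, compute each clause's truth value onto a fresh ancilla using one $\toffoli$ plus $\nnot$ gates, aggregate the conjunction of all clause ancillas onto the output qubit via a $\toffoli$ tree, uncompute the clause ancillas, apply $X$ on the output, and finally reapply $H^{\otimes n}$. A short calculation shows $\langle 0 | C_\psi | 0\rangle = |\{x: \psi(x)=1\}|/2^n$, so this amplitude is nonzero iff $\psi$ is satisfiable. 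Since each $\toffoli$ admits a Clifford+$T$ decomposition with at most seven $T$-gates, the total $T$-count is bounded by $N \leq c\cdot C(\delta)\cdot n$ for an explicit constant $c$ (around $14$ after accounting for the uncomputation of both the clauses and the AND-tree).

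Running the hypothesized simulator on each $C_{\phi_i}$ and ORing the outcomes produces a 3-SAT algorithm with total runtime $\mathrm{poly}(n)\cdot 2^{\delta n} \cdot 2^{2.2451 \times 10^{-8}\cdot c\cdot C(\delta)\cdot n}$. The result then follows by choosing $\delta$ so that $\delta + 2.2451 \times 10^{-8}\cdot c\cdot C(\delta) \leq \log_2(1.3) \approx 0.3785$. The main obstacle is this arithmetic balance: $C(\delta)$ grows rapidly as $\delta \to 0$, so verifying that the specific constant $2.2451\times 10^{-8}$ really suffices reduces to plugging in an explicit quantitative form of the Sparsification Lemma together with the exact $T$-count of a $\toffoli$, and optimizing $\delta$ to minimize the combined exponent. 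The number $2.2451\times 10^{-8}$ is exactly the value this optimization yields when one inserts the best publicly available bounds, so the challenge is bookkeeping rather than any new conceptual idea.
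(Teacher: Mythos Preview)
Your proposal is correct and follows essentially the same route as the paper: sparsify via Impagliazzo--Paturi--Zane with an explicit parameter, build a reversible circuit with $O(L)$ Toffoli gates whose $\langle 0|C|0\rangle$ amplitude counts satisfying assignments, decompose each Toffoli into seven $T$-gates to get $N\le cL$, and then optimize the sparsification parameter against the simulator's exponent to land on the stated constant. One small slip: a 3-literal clause requires two Toffoli gates rather than one (e.g.\ via $a\vee b\vee c=\neg(\neg a\wedge\neg b\wedge\neg c)$), so your per-clause count is off, but this only perturbs the constant $c$ and leaves the argument intact.
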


Theorem \ref{theorem:main} relies on the following.

\begin{Lemma}[Corollary 2, \cite{impagliazzo2001problems}] \label{ethcons}
    \label{lem:eth}
    Assuming the ETH, there exists constant $a>0$ such that any classical algorithm solving {\rm 3-SAT} instances with length $L$ takes $2^{aL}$ time, where again $L$ is the \emph{length} of the formula.
\end{Lemma}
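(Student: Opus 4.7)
The plan is to derive the length-based lower bound from the variable-based statement of ETH by invoking the Sparsification Lemma of Impagliazzo, Paturi, and Zane, which is the main technical content of \cite{impagliazzo2001problems}. Concretely, the Sparsification Lemma says that for every $\epsilon>0$ there exists a constant $c(\epsilon)$ and an algorithm running in time $2^{\epsilon n}\cdot\mathrm{poly}(n)$ that, on input a 3-CNF formula $F$ on $n$ variables, outputs a list of at most $2^{\epsilon n}$ 3-CNF formulas $F_1,\dots,F_t$ on the same variables, each of length at most $c(\epsilon)\cdot n$, with $F$ satisfiable iff some $F_i$ is satisfiable. I would take this lemma as a black box and reduce the claim to ETH by contrapositive.

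The contrapositive of the statement is: if for every $a>0$ there is an algorithm solving 3-SAT of length $L$ in time $2^{aL}$, then for every $\delta>0$ there is an algorithm solving 3-SAT on $n$ variables in time $2^{\delta n}$, violating ETH. So I would fix an arbitrary $\delta>0$, choose the Sparsification parameter $\epsilon := \delta/2$, obtain the corresponding constant $c(\epsilon)$, and then invoke the hypothetical fast algorithm with parameter $a := \delta/(2c(\epsilon))$. Given an $n$-variable 3-CNF $F$, I first sparsify in time $2^{\epsilon n}\cdot\mathrm{poly}(n)$ to produce at most $2^{\epsilon n}$ formulas of length at most $c(\epsilon)n$, then run the assumed $2^{aL}$-algorithm on each. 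The total running time is bounded by
\[
 2^{\epsilon n}\cdot\mathrm{poly}(n) \;+\; 2^{\epsilon n}\cdot 2^{a\cdot c(\epsilon)\cdot n}
 \;\le\; 2^{(\epsilon + a\,c(\epsilon))n+o(n)}
 \;=\; 2^{\delta n + o(n)}.
\]
Since $\delta>0$ was arbitrary, this refutes ETH, completing the contrapositive.

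Packaging this as a direct statement: assuming ETH, fix the ETH constant $\delta_0>0$ guaranteeing that no algorithm solves $n$-variable 3-SAT in time $2^{\delta_0 n}$. Choose any $\epsilon<\delta_0$ and set $a := (\delta_0-\epsilon)/c(\epsilon)$; the computation above shows that a hypothetical $2^{aL}$-time length-based algorithm would yield a $2^{\delta_0 n}$-time variable-based algorithm, contradicting ETH. Thus such an $a>0$ exists, which is exactly the lemma.

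The one genuinely nontrivial ingredient is the Sparsification Lemma itself, and that is precisely what is cited from \cite{impagliazzo2001problems}; rather than reprove it, I would simply state it and invoke it. The remaining obstacle is the bookkeeping of parameters to ensure that $\epsilon$ and $a$ can be simultaneously chosen small enough that $\epsilon+a\,c(\epsilon)$ falls below the ETH threshold $\delta_0$, but this is immediate because $c(\epsilon)$ is a constant depending only on $\epsilon$ and one is free to pick $a$ afterwards.
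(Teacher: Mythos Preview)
Your proposal is correct and mirrors the paper's own proof almost exactly: both argue by contrapositive, apply the Sparsification Lemma with parameter $\epsilon'=\delta/2$ to obtain $\le 2^{\epsilon' n}$ subinstances each of length $\le c(\epsilon')n$, and then run the assumed length-based solver with $a=\epsilon'/c(\epsilon')$ on each to obtain total time $2^{\delta n}$. The only cosmetic difference is that the paper actually reproves the Sparsification Lemma in its appendix rather than citing it as a black box, but for the purposes of this lemma the argument is the same.
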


To compute explicit constants, Theorem \ref{thm:main1} relies on the following.

\begin{Lemma}\label{lem:main}
    Assume that a classical algorithm solves {\rm 3-SAT} in time $O(2^{3.1432\times 10^{-7}L})$, where $L$ is the length of the formula,  $m_{2}$ is the number of $2$-clauses, $m_{3}$ is the number of $3$-clauses, so that $m = m_{2} + m_{3}$ and $L = 2m_{2}+3m_{3}-1$. 
    
    Then one can create a {\rm 3-SAT} solver that achieves an $O(1.3^{n})$ running-time 
    for $m={\rm poly}(n)$, where $n$ denotes the number of variables of the {\rm 3-SAT} instance and $m$ denotes the number of clauses.
\end{Lemma}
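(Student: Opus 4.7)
The plan is to apply the Impagliazzo--Paturi--Zane Sparsification Lemma to reduce an arbitrary 3-SAT instance on $n$ variables to a disjunction of sparse instances whose length is bounded linearly in $n$, and then run the hypothesized length-based solver on each piece. Concretely, for any $\epsilon>0$ the Sparsification Lemma produces, in time $2^{\epsilon n}\cdot \mathrm{poly}(n)$, a collection of at most $2^{\epsilon n}$ 3-CNF formulas $\varphi_1,\dots,\varphi_t$, each on (a subset of) the original $n$ variables and each of length at most $c(\epsilon)\cdot n$ for an explicit function $c(\epsilon)$, such that $\varphi$ is satisfiable iff some $\varphi_i$ is. The original instance is then solved by invoking the assumed $O(2^{3.1432\times 10^{-7}L})$ algorithm on each $\varphi_i$.

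The first step is to fix an explicit version of the Sparsification Lemma that yields a usable constant $c(\epsilon)$; I would use the form in which one controls both 2-clause and 3-clause counts so that $L = 2m_2+3m_3-1$ is a valid length parameter on every sparse sub-instance. The second step is the runtime arithmetic: running the assumed algorithm on each of the $t\le 2^{\epsilon n}$ sparse instances gives a total time of
\[
O\bigl(2^{\epsilon n}\cdot 2^{3.1432\times 10^{-7}\cdot c(\epsilon)\cdot n}\bigr)
\;=\;
O\bigl(2^{(\epsilon + 3.1432\times 10^{-7}\, c(\epsilon))\, n}\bigr).
\]
The third step is to choose $\epsilon$ so that
\[
\epsilon \;+\; 3.1432\times 10^{-7}\, c(\epsilon)\;\le\;\log_2(1.3)\;\approx\;0.37851,
\]
which immediately delivers an $O(1.3^n)$ running-time. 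The assumption $m=\mathrm{poly}(n)$ guarantees that polynomial preprocessing factors (from both the sparsification procedure and the reduction itself) are absorbed into the $O(\cdot)$.

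The main obstacle is purely quantitative: the constant $3.1432\times 10^{-7}$ is tiny precisely because $c(\epsilon)$ is huge in the known explicit forms of the Sparsification Lemma (roughly growing like $(k/\epsilon)^{O(k)}$ for $k=3$). One must therefore pick $\epsilon$ small enough that $c(\epsilon)$ does not blow the $0.37851$ budget once multiplied by $3.1432\times 10^{-7}$, yet not so small that the $2^{\epsilon n}$ branching factor alone exceeds the budget. I would choose $\epsilon$ well below $\log_2(1.3)$, say on the order of $10^{-2}$ to $10^{-1}$, and then verify that the corresponding explicit value of $c(\epsilon)$ from the sparsification bound satisfies $3.1432\times 10^{-7}\,c(\epsilon)\le \log_2(1.3)-\epsilon$. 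The constant $3.1432\times 10^{-7}$ in the hypothesis is evidently calibrated to make this verification go through with the explicit form of $c(\epsilon)$ used in \cite{impagliazzo2001problems} (or a standard refinement thereof); this single numerical inequality is the only nontrivial step beyond citing the Sparsification Lemma.
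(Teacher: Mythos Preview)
Your proposal is correct and follows essentially the same approach as the paper: sparsify, run the hypothesized length-based solver on each sparse instance, and choose the sparsification parameter so that the combined exponent stays below $\log_2(1.3)$. The paper simply makes explicit what you leave as a numerical check, fixing the sparsification parameters $\theta_1=109.395$, $\theta_2=58367.2$ so that the branching exponent $\gamma(\theta_1,\theta_2)$ plays the role of your $\epsilon$ and the length bound $\eta(\theta_1,\theta_2)=2(\theta_1+\theta_2)$ plays the role of your $c(\epsilon)$, and then verifies $\gamma+3.1432\times10^{-7}\eta<\log_2(1.3)$.
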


In the next section we describe the Sparsification Lemma, which was developed in \cite{impagliazzo2001problems} to 
address the type of reduction found in Lemmas  \ref{ethcons} and  \ref{lem:main}.  The necessity of the Sparsification Lemma comes from reducing the 3-SAT problem to quantum circuits, as the number of $T$ gates in the reduced instance will depend on $L$, the {\em length} of the 3SAT instance, 
and not on $n$, the {\em number of variables}. The reduction from 3-SAT to strong simulation of quantum circuits is described in Section \ref{SATtoT}, while
a full proof of the Sparsification Lemma with explicit constants will be given in the Appendix.

\section{ETH and the Sparsification Lemma}

The input for the 3-SAT problem with parameters $n$ and $m$ is a Boolean formula
\[
\phi(x_{1},\ldots,x_{n}) = \bigwedge_{i=1}^{m}\left( \bigvee_{j=1}^{k_{i}} l_{i,j}\right) \;\;\;\;\;\;\;\; l_{i,j}\in \{x_{1},\ldots,x_{n}, \neg x_{1},\ldots,\neg x_{n}\}
\]
where 
$l_{i,j}$ are called \emph{literals} and the sub-expressions 
$C_i=\bigvee_{j=1}^{k_i}l_{ij}$ are called \emph{clauses}. 
The 3 in the 3-SAT specifies $k_i\le 3$.
The \emph{length}, $L$, of an instance is the number of 
AND/OR gates in the formula, namely $L=\sum_{i=1}^{m}k_i - 1$. We are interested in formulas $\phi$ with polynomial length.  For 3-SAT, we have $L = m_1-2m_{2}+3m_{3}-1$ where
$m_{2}$ is the number of 2-clauses and $m_{3}$ is the number of 3-clauses.

Given a 3-SAT formula $\phi$, the task is to determine whether there exists an 
assignment $\vec{x} \in \{0,1\}^n$ such that $\phi(x_1,\ldots, x_n)=1$,
in which case we call $\phi$ is satisfiable. For this task, one can assume without loss of generality that $m_1=0$, since otherwise all the singleton clauses imply definite values on the variables, and thus can be reduced.

It has been long known that 3-SAT is $NP$-complete, prohibiting it from having a
polynomial-time randomized algorithm based on the belief that $BPP \neq NP$. There were extensive attempts 
to find at least a sub-exponential algorithm for the problem without success, giving rise to the following conjecture.

\medskip
\noindent{\bf Exponential Time Hypothesis.} There is an $\epsilon>0$ such that the
time complexity of {\rm 3-SAT} is at least $(1+\epsilon)^{n} {\rm poly}(m)$.

\medskip

Due to subsequent improvements \cite{paturi1997satisfiability, paturi2005improved, schoning1999probabilistic, hertli20143} the current state-of-the-art 
3-SAT solver takes time $\approx 1.3^n$ (and in fact, the precise exponent is slightly worse). 
A classical algorithm breaking this bound would have a huge impact on theoretical computer science.

When trying to lower bound the running time of a 3-SAT solver in terms of $L$ and not of $n$, the ETH initially seems to be of little help,
as $L$ can be as large cubic in $n$. The following Sparsification Lemma, however, 
gives the desired $n$ to $L$ conversion.

\medskip
\begin{Lemma}[Sparsification Lemma, \cite{impagliazzo2001problems}]
    \label{lem:spa}
    Given any $\epsilon > 0$, there is an algorithm $A_{\epsilon}$ that, on any {\rm 3-SAT} instance $\phi$ with $n$ variables, outputs a list $\ell = \phi_1,\cdots, \phi_k$
    of {\rm 3-SAT} instances in time $O_{\epsilon}(2^{\epsilon n})$, satisfying:
    \begin{itemize}
        \item $k\leq O_{\epsilon} (2^{\epsilon n})$;
        \item each formula $\phi_i$ has length at most $c(\epsilon)n$, where $c(\epsilon)$ does not depend on $n$;
        \item $\phi$ is satisfiable if and only if one of the generated sub-instances are satisfiable: $\phi = \bigvee_{i=1}^l \phi_i$.
    \end{itemize}
\end{Lemma}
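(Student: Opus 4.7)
My approach follows Impagliazzo, Paturi, and Zane's original strategy: I would build a branching tree whose leaves are sparse 3-SAT instances, using the Erd\H{o}s--Rado Sunflower Lemma to produce each branch. First, I would fix the relevant notion of sparsity: a formula is \emph{sparse} if, for each clause size $i \in \{2,3\}$, every set of literals is contained in at most some threshold $t_i(\epsilon)$ many $i$-clauses. A sparse formula has length at most $c(\epsilon)\, n$ by double counting, which supplies the constant $c(\epsilon)$ demanded by the Lemma.

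Next, I would describe the branching step. If the current instance is not sparse, some set $S$ of literals is contained in more than $t_i$ many $i$-clauses. Applying the Sunflower Lemma (with petal count depending on $\epsilon$) to this family produces a sunflower: $p$ clauses whose pairwise intersections all equal a common core $H$. I would then split the instance into two sub-instances based on the truth value of $H$. In the \emph{core-satisfied} branch we assume at least one literal of $H$ is true, which satisfies every sunflower clause and lets us replace them by the single shorter clause $\bigvee_{\ell \in H} \ell$. In the \emph{core-falsified} branch we set every literal of $H$ to false, fixing $|H|$ variables and shrinking each sunflower clause to its petal. In either case $\phi$ is satisfiable iff the corresponding sub-instance is, so the iterated disjunction equivalence $\phi \equiv \bigvee_i \phi_i$ is preserved throughout the tree.

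Third, I would analyze the tree via a potential function of the form $\mu(\phi) = \sum_{i \leq 3} \mu_i \cdot m_i(\phi)$, with the weights $\mu_i$ chosen so that every branching step strictly decreases $\mu + \alpha n$ by an amount depending only on $\epsilon$. This bounds the recursion depth, and a careful accounting that separates branches reducing $n$ from branches only reducing $\mu$ gives at most $O_\epsilon(2^{\epsilon n})$ leaves. Each leaf is sparse by construction, hence of length at most $c(\epsilon)\, n$, and the total running time is dominated by the number of leaves times a polynomial per-step overhead.

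The hard part will be the parameter optimization. The Sunflower Lemma requires roughly $(p-1)^{i} \cdot i!$ many $i$-clauses to guarantee a $p$-petal sunflower, so larger $p$ yields more progress per branch (bigger core, more clauses eliminated in the satisfied branch, more variables fixed in the falsified branch) but forces a correspondingly larger threshold $t_i$ and hence a larger $c(\epsilon)$. The weights $\mu_i$ must be chosen so that \emph{both} branches are genuinely progressive while keeping the effective branching factor close to $2$, and the analysis must track the shorter clauses created in the core-satisfied branch---potentially requiring an extended sparsity condition that also bounds $1$- and $2$-clauses. Extracting explicit constants sharp enough for Lemma~\ref{lem:main} is the most delicate aspect of this balancing act, and is the step I expect the appendix to carry out in full.
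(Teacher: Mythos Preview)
Your overall skeleton---branch on a sunflower into a ``heart'' instance and a ``petal'' instance, terminate when no large sunflower remains, and argue that leaves are sparse---matches the paper. Three points of departure are worth noting.

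First, you invoke the Erd\H{o}s--Rado lemma, but for 3-SAT the paper never needs it: only $(2,1)$-, $(3,2)$-, and $(3,1)$-``sunflowers'' occur, and the paper's sunflower is simply a set of $k$-clauses whose common intersection has size $h$ (pairwise disjointness of petals is \emph{not} required). A good sunflower is found directly by checking whether some literal or literal-pair lies in enough clauses, and a priority ordering $(2,1)>(3,2)>(3,1)$ does the work that Erd\H{o}s--Rado would otherwise supply.

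Second, your petal branch ``sets every literal of $H$ to false, fixing $|H|$ variables.'' The paper does not substitute values; it uses the Boolean identity $(a\lor b)\land(a\lor c)=a\lor(b\land c)$ to obtain $\phi=\phi_h\lor\phi_p$ with $\phi_p$ formed by replacing each sunflower clause $C_i$ by $C_i\setminus C$ while leaving all other clauses untouched. In particular $n$ never decreases along the recursion, so your plan to get leverage from ``branches reducing $n$'' is not available here.

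Third, and most importantly, the paper's leaf-count analysis is not via a potential $\sum_i\mu_i m_i(\phi)$ (note $m_i$ can be polynomially large in $n$, so such a potential does not by itself yield an $O(n)$ depth bound). Instead the paper tracks \emph{immigrant clauses}---clauses not present in the root---and proves combinatorially that at most $O(\theta_1 n)$ immigrant $1$- and $2$-clauses can ever be created along any root-to-leaf path, using the priority ordering to bound $r_2(\phi^*)\le 2\theta_1-1$. Since every branching step creates at least one new immigrant clause, the depth is $O(\theta_1 n)$; since each petal step creates $\ge\theta_1$ immigrant $1$-clauses or $\ge\theta_2$ immigrant $2$-clauses, the number of petal edges on any path is at most $n/\theta_1+4\theta_1 n/\theta_2$, and the leaf count is bounded by a binomial sum. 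This immigrant-clause bookkeeping is the substantive technical idea you are missing, and it is what makes the explicit constants in Lemma~\ref{lem:main} computable.
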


\noindent Among the consequences to the Sparsification Lemma are Lemmas \ref{ethcons} and \ref{lem:main}. To prove Lemma \ref{lem:main} we 
must also compute the constants implicit in the Sparsification Lemma.  These proofs can be found in the Appendix.

\section{From 3-SAT to Clifford$+$T}\label{SATtoT}

\noindent In this section, we give a reduction from the 3-SAT problem to the problem of strongly simulating quantum circuits,
with the following properties.

\begin{itemize}
    \item[$(i)$]{For a $3$-SAT instance $\phi$ with length $L$, we construct a quantum circuit ${\cal C}'_{\phi}$} with at most $2L$ Toffoli gates and ${\rm poly}(L)$ NOT and CNOT 
    gates such that ${\cal C}'_{\phi}|x\rangle_I|0\rangle_A|0\rangle_B  =|x\rangle_I|0\rangle_A|\phi(x)\rangle_B$, where $A$ is the system of ancilla qubits and $B$ is a system of one single qubit.
    \item[$(ii)$]We choose a basis state (e.g. $|0\ldots 0\rangle$) which counts the number of assignments satisfying $\phi$ in its amplitude when running $\mathcal{C}_{\phi}:=H^{\otimes n}_I{\cal C}'_{\phi}H^{\otimes n}_I$, satisfying
    \[
    \langle 0 \ldots 0 | {\cal C}_{\phi} |0\ldots 0\rangle = { |\,  \{ x\in \{0,1\}^{n}\;  |\;  \phi(x) = 1\}\, |  \over 2^{n}}.
    \]
\end{itemize}

To achieve $(i)$, we introduce reversible computation, first investigated in~\cite{bennett1973logical}, for conversion of classical computation into unitary quantum circuits.

\noindent{\bf Reversible Circuits.} A reversible classical circuit consists of reversible gates. A reversible classical gate $F$ is simply an invertible function 
$F:\{0,1\}^d\rightarrow \{0,1\}^d$, for some $d$. Typically $d$ is one, two, or three. An important example of a reversible classical gate is the 
Toffoli gate $\toffoli(x,y,z)=(x,y,z\oplus (x\land y))$ acting on 3 bits. 

\noindent{\bf Universal Gate Set.} Throughout, our choice of universal gate set for reversible computation is
\[
{\cal G} = \{\toffoli,\cnot,\nnot\}.
\] 

A reversible circuit is readily a quantum circuit by replacing the gate set ${\cal G}$ using the quantum gate set $\{\toffoli, \cnot, X\}$. a reversible circuit satisfying property $(i)$ is defined as a \emph{tidy computation}.

\medskip

\noindent{\bf Tidy Computation.} We say a reversible circuit $C:\{0,1\}^{n+a(n)+1}\rightarrow \{0,1\}^{n+a(n)+1}$ \emph{tidily computes} a function $f:\{0,1\}^n\rightarrow \{0,1\}$ if 
\begin{equation}\label{reversible}
\forall \; x\in\{0,1\}^n, y\in\{0,1\}, \;\;\; C(x,0^{a(n)},y)=(x,0^{a(n)},y\oplus f(x)).
\end{equation}

\begin{Lemma}
    \label{lem:rev}
    Suppose $\phi$ is a SAT formula with length $L$. Then, in time polynomial in $L$, we can construct a reversible circuit ${\cal C}'$ that tidily computes $\phi$ with
    at most $2L$ Toffoli gates.
\end{Lemma}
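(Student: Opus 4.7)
The plan is to first view $\phi$ as a classical Boolean circuit with exactly $L$ two-input AND/OR gates, simulate each such gate reversibly using one Toffoli (aided by CNOT/NOT gates that do not count toward the Toffoli tally), and then apply the standard compute--copy--uncompute idiom to tidy the computation, which doubles the Toffoli count to $2L$.

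For the first step, I parse each clause $C_i = l_{i,1}\vee\cdots\vee l_{i,k_i}$ as a left-associated chain of $k_i-1$ two-input OR gates and combine the $m$ clauses with $m-1$ two-input AND gates, giving $\sum_i(k_i-1)+(m-1)=\sum_i k_i -1 = L$ internal gates in total, matching the definition of $L$. Negated literals $\neg x_j$ I handle by conjugating the relevant sub-computation with a pair of in-place \nnot{} gates applied to $x_j$; these are free from the Toffoli perspective and leave the input register untouched at the end.

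For the second step, I allocate one fresh ancilla per internal Boolean gate, initialized to $0$. An AND gate with inputs $a,b$ and fresh target $c$ is computed by a single $\toffoli(a,b,c)$, giving $c\leftarrow a\wedge b$. An OR gate I realize via the identity $a\vee b = a\oplus b\oplus (a\wedge b)$, implemented as two CNOTs from $a$ and $b$ into $c$ followed by $\toffoli(a,b,c)$. Evaluating the $L$ gates in topological order thus uses exactly $L$ Toffolis and polynomially many CNOT/NOT gates, and leaves the value $\phi(x)$ written on the designated output ancilla.

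For the third step, I CNOT that output ancilla into the target qubit $y$, producing $y\mapsto y\oplus \phi(x)$, and then run the entire Step 2 sequence in reverse. Because $\toffoli$, $\cnot$, and $\nnot$ are self-inverse, this uncomputation is literally the same gate list read backwards; it clears every ancilla back to $0$ and contributes another $L$ Toffolis, for a total of $2L$. The whole construction is obviously polynomial-time in $L$. I expect the only nuisance to be bookkeeping for negated literals so that the inserted NOTs on input wires cancel correctly across the compute/uncompute phases; this is routine rather than a genuine obstacle, and the core of the lemma is the combination of the tight gate count $L$ of the formula with the factor-of-two overhead of reversible uncomputation.
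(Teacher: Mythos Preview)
Your proposal is correct and is essentially the same argument the paper gives: build an untidy (what the paper calls \emph{diagonal}) reversible evaluation of $\phi$ using one Toffoli per two-input AND/OR gate and only CNOT/NOT for literals and negations, then apply compute--copy--uncompute to make it tidy at the cost of doubling the Toffoli count to $2L$. The paper packages the inductive step as a separate lemma (composing diagonal circuits for $f_1\wedge f_2$, $f_1\vee f_2$, $\lnot f_1$, and the diagonal-to-tidy conversion), whereas you describe the same construction directly in topological order; the content is identical.
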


\medskip

Before proving Lemma~\ref{lem:rev}, we first define a specific form of reversible computation called diagonal computation.

\medskip
\noindent{\bf Diagonal Computation.} We say that a reversible circuit $C:\{0,1\}^n\rightarrow \{0,1\}^{n+a(n)+1}$ \emph{diagonally computes} a function $f:\{0,1\}^n\rightarrow\{0,1\} $ if for all $x$, $C(x,0,0)=(x,*,f(x))$.

\medskip

Diagonal computation is a special type of untidy computation, where the inputs on the input wires are preserved after the computation. This helps us compose diagonal computation circuits in a gate-efficient manner.

\begin{Lemma}
    \label{lem:block}
    Suppose $U_1$ and $U_2$ diagonally compute $f_1$ and $f_2$ with at most $a_1$ and $a_2$ ancilla wires and $t_1$ and $t_{2}$ Toffoli gates respectively, over the same set of input wires.
    Then there exists:
    \begin{enumerate}
        \item a circuit with at most $a_1+a_2+2$ ancilla wires and $t_1+t_2$ Toffoli gates that diagonally computes $f_{1} \wedge f_2$,
        \item a circuit with at most $a_1+a_2+2$ ancilla wires and $t_1+t_2$ Toffoli gates that diagonally computes $f_{1} \vee f_2$, and
        \item a circuit with at most $a_1$ ancilla wires and $t_1$ Toffoli that diagonally computes $\lnot f_1$.
        \item a circuit with at most $a_1+1$ ancilla wires and $2t_1$ Toffoli that tidily computes $f_1$.
    \end{enumerate}
\end{Lemma}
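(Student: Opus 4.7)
The plan is to treat the four parts one at a time, relying on diagonality to reuse the input register for parts 1--2 and on Bennett's uncomputation for part 4; parts 3 and 4 are essentially immediate, while parts 1--2 compose $U_1$ and $U_2$ on disjoint ancilla blocks followed by a small combining gadget.

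For part 3, I would simply append a single $\nnot$ to $U_1$'s output wire: since $U_1$ produces $(x,*,f_1(x))$, this flips the output to $\neg f_1(x)$ with no additional ancilla and no Toffoli. For part 4, I would use Bennett's uncomputation: add one scratch wire to $U_1$'s $a_1$ internal ancillas, run $U_1$ to produce $f_1(x)$ on the scratch ($t_1$ Toffolis), $\cnot$ the scratch onto the designated tidy output wire $y$, and then apply $U_1^{-1}$ to restore both the scratch and the internal ancillas to $0$; since each element of ${\cal G}$ is self-inverse, $U_1^{-1}$ has the same Toffoli count as $U_1$, giving the advertised $2t_1$ Toffolis on $a_1+1$ ancillas.

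For parts 1 and 2, the key structural observation is that diagonality means both $U_1$ and $U_2$ leave the input wires $x$ intact, so they can be applied sequentially on the shared input register using disjoint ancilla blocks. Concretely I would reserve $a_1+1$ fresh wires for $U_1$ (internal ancillas plus a dedicated output wire $w_1$), another $a_2+1$ fresh wires for $U_2$ (similarly, with output wire $w_2$), and one final combined-output wire, totalling $a_1+a_2+2$ ancillas. After running $U_1$ then $U_2$ we have $w_1=f_1(x)$ and $w_2=f_2(x)$; for AND, a single Toffoli with controls $w_1,w_2$ targeting the output wire writes $f_1\wedge f_2$, while for OR I would realise $f_1\vee f_2 = f_1\oplus f_2 \oplus (f_1\wedge f_2)$ by the same Toffoli together with two $\cnot$s onto the output wire.

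The one subtlety, and essentially the only point of care, is that this naive combining step appears to contribute one extra Toffoli beyond $t_1+t_2$. I expect to eliminate it by folding the combining Toffoli into $U_2$ itself---replacing the final output-writing $\cnot$ that lifts $f_2$ onto $w_2$ by a Toffoli additionally controlled on $w_1$---so that no net Toffoli is added and the stated count $t_1+t_2$ is matched exactly. Alternatively one may read the stated count as $t_1+t_2+O(1)$ per combination, which is all that is needed downstream. Everything else in the argument is routine diagonality and composition.
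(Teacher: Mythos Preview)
Your construction is essentially the same one the paper gives (it presents the AND/OR/NOT circuits by picture and invokes the standard untidy-to-tidy conversion for part~4). Parts~3 and~4 are exactly right.

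For parts~1--2, your diagnosis of the extra Toffoli is correct, but the ``folding'' fix does not actually work. First, $U_2$ is a black box here; you have no guarantee that its last gate is a $\cnot$ writing $f_2$ to $w_2$. Second, even if it were, replacing that $\cnot$ by a Toffoli \emph{increases} the Toffoli count of the modified $U_2$ from $t_2$ to $t_2+1$, so you still land at $t_1+t_2+1$ overall. In fact the exact bound $t_1+t_2$ cannot hold in general: take $f_1=x_1$ and $f_2=x_2$, each diagonally computed with zero Toffolis; a circuit over $\{\toffoli,\cnot,\nnot\}$ with zero Toffolis computes only affine Boolean functions of its inputs, and $x_1\wedge x_2$ is not affine.

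So the correct reading is $t_1+t_2+1$ (one combining Toffoli per binary connective), which is almost certainly what the paper's figures depict and is exactly what the downstream Lemma~\ref{lem:rev} uses: a formula with $L$ AND/OR gates yields a diagonal circuit with $L$ Toffolis, and tidying via part~4 doubles this to the stated $2L$. Your ``$+O(1)$'' fallback is the right resolution; just drop the claim that folding recovers $t_1+t_2$ on the nose.
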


\begin{proof}[Proof of Lemma~\ref{lem:block}]
    The reversible diagonal AND, OR and NOT of the circuits, with the appropriate sizes and widths, are constructed explicitly below. Moreover, a circuit diagonally computing $f_1$ is also untidily computing $f_1$, thus 4.\ can be proven using conversion from untidy circuits to tidy circuits.
    \begin{figure}[H]
        \centering
        \begin{subfigure}[h!]{0.29\textwidth}
        \includegraphics[width = \textwidth]{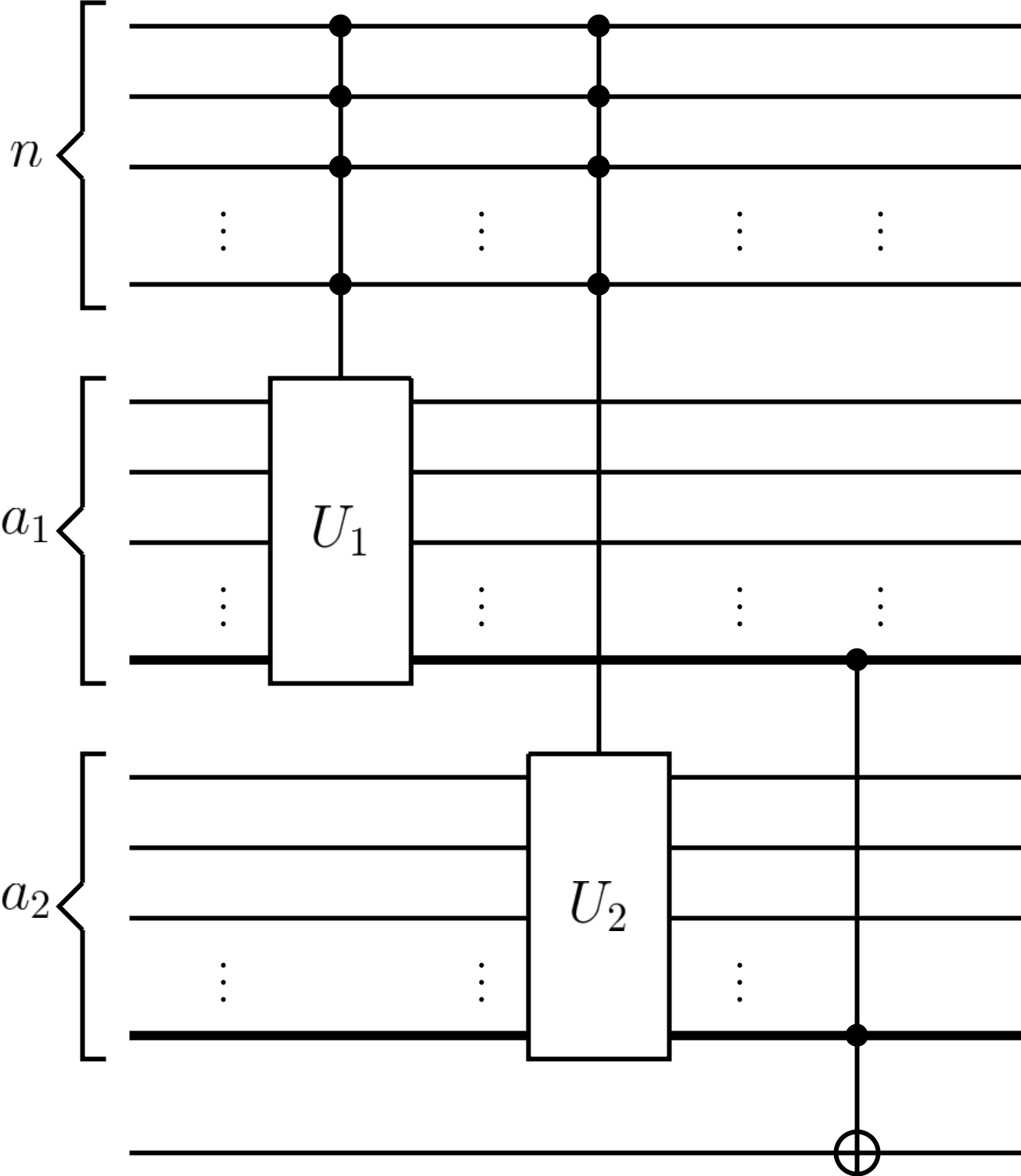}
        \caption{Circuit for $f_1\wedge f_2.$}
        \end{subfigure}
        ~
        \begin{subfigure}[h!]{0.36\textwidth}
        \includegraphics[width = \textwidth]{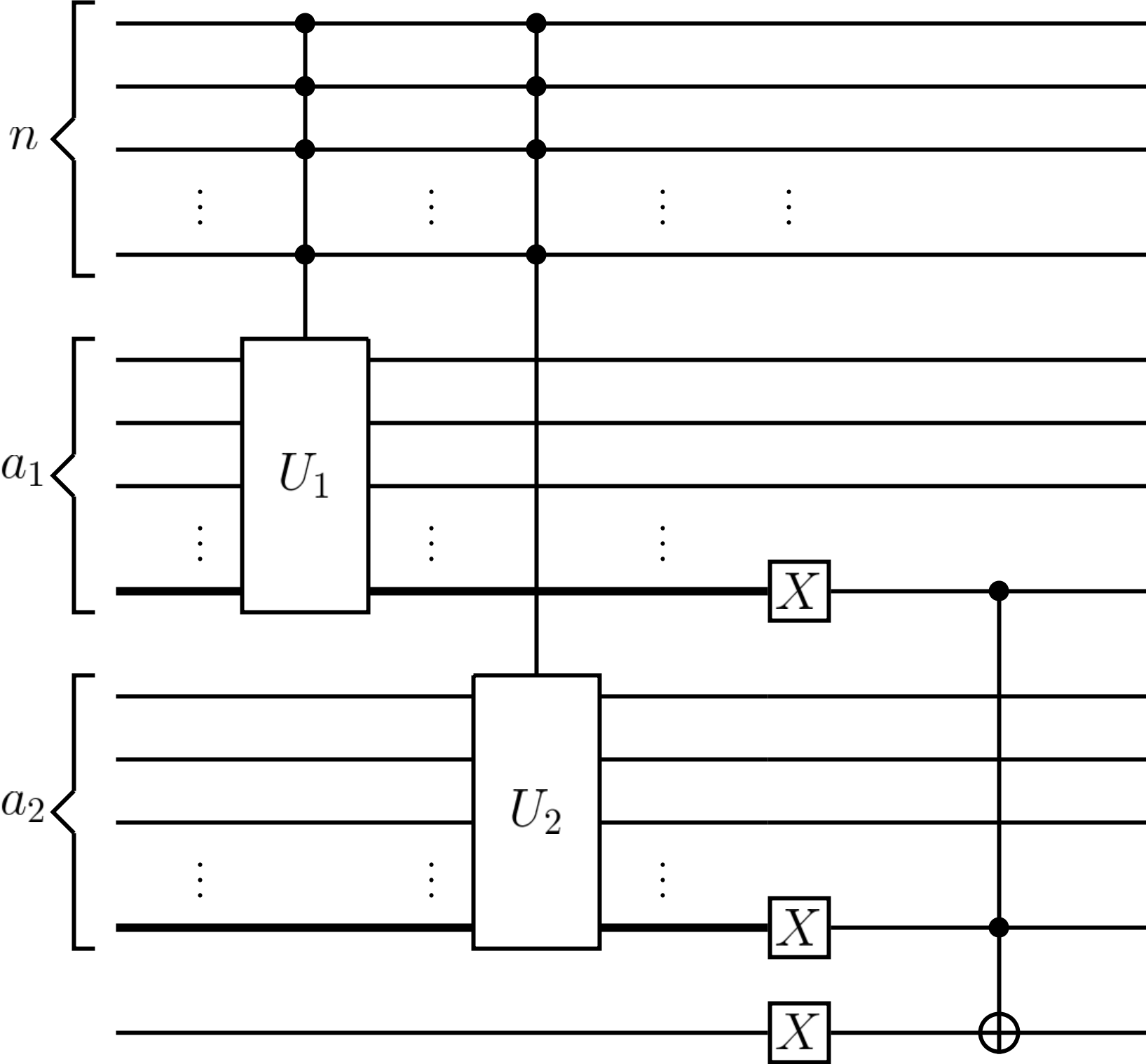}
        \caption{Circuit for $f_1\vee f_2$.}
        \end{subfigure}
        ~
        \centering
        \begin{subfigure}[h!]{0.3\textwidth}
        \centering
        \includegraphics[width = \textwidth]{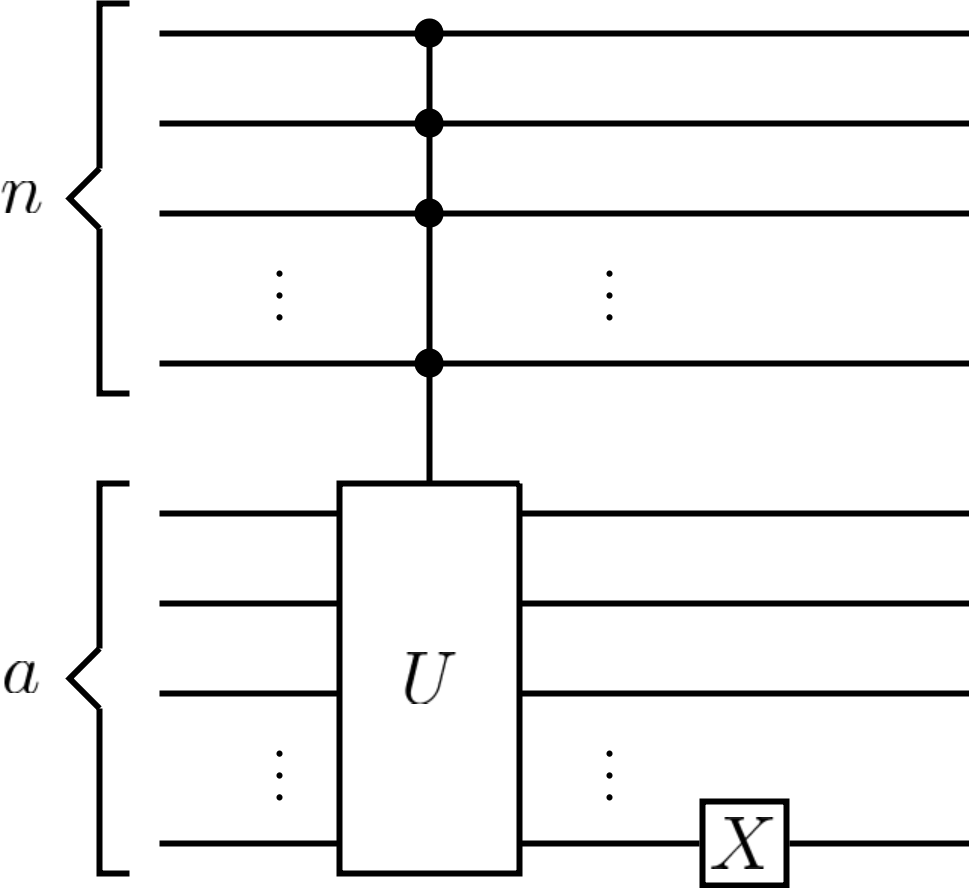}
        \caption{Circuit for $\lnot f_1$.}
        \end{subfigure}
    \end{figure}
\end{proof}

Property $(i)$ is then a direct application of Lemma~\ref{lem:rev}. For property $(ii)$, we take the following from~\cite{nielsen2002quantum}:

\begin{Lemma}
    \label{lem:toftot}
    A Toffoli gate can be written as composition of $7$ $T$-gates and $8$ Clifford gates.
\end{Lemma}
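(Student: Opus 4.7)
The plan is to exhibit an explicit circuit and verify by direct calculation that it implements the $\toffoli$ gate. I would use the standard decomposition that already appears in Nielsen and Chuang (Figure 4.9 in their textbook), which the lemma is in fact citing: conjugate the target qubit by Hadamards so that the goal becomes implementing a controlled-controlled-$Z$, then realize $CCZ$ as a product of controlled-eighth-root-of-unity rotations via the identity $e^{i\pi/4 \cdot x_1 x_2 x_3} = $ a signed sum of $x_1$, $x_2$, $x_3$, $x_1{\oplus}x_2$, $x_1{\oplus}x_3$, $x_2{\oplus}x_3$, and $x_1{\oplus}x_2{\oplus}x_3$. Each of these seven parity terms contributes one $T$ or $T^\dagger$ gate, conjugated by a pair of $\cnot$s; after cancellation of adjacent $\cnot$s the accumulated Clifford overhead reduces to $6$ $\cnot$s plus the two $H$s, giving $8$ Clifford gates in total.

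Concretely, the steps I would carry out are: (1) write down the $15$-gate circuit explicitly, applying $H$ to the target, then an alternating sequence of four $\cnot$s from the two controls onto the target interleaved with $T^\dagger, T, T^\dagger, T$ on the target, then $T$ on control-$2$ and $T$ on the target together with a $\cnot$ from control-$1$ to control-$2$ and a final $H$ on the target, then $T$ on control-$1$, $T^\dagger$ on control-$2$, and a final $\cnot$ from control-$1$ to control-$2$; (2) count the resources to confirm exactly $4$ $T$ plus $3$ $T^\dagger$ gates (i.e.\ seven $T$-gates in the convention of the footnote on $T$-count) and $6$ $\cnot$s plus $2$ Hadamards (i.e.\ eight Clifford gates); (3) verify correctness on the computational basis.

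For step (3), once the outer Hadamards on the target are peeled off, the remaining circuit acts diagonally on basis states $|x_1 x_2 x_3\rangle$, so verification reduces to checking that the accumulated phase equals $(-1)^{x_1 x_2 x_3}$. This is a straightforward calculation: each of the seven $T/T^\dagger$ insertions contributes a phase $e^{\pm i\pi/4 \cdot p(x)}$ for some parity $p$ of a subset of $\{x_1,x_2,x_3\}$ determined by the surrounding $\cnot$s, and summing the seven signed parities gives $8 x_1 x_2 x_3 \pmod{8}$, yielding exactly the $CCZ$ phase.

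The work is entirely bookkeeping; there is no conceptual obstacle, since the identity $8 x_1 x_2 x_3 = x_1 + x_2 + x_3 - (x_1{\oplus}x_2) - (x_1{\oplus}x_3) - (x_2{\oplus}x_3) + (x_1{\oplus}x_2{\oplus}x_3) \pmod 8$ (valid for $x_i \in \{0,1\}$) is what forces the gate count to be exactly seven $T$-type gates. The only thing to be careful about is aligning the $\cnot$ pattern so that consecutive identical $\cnot$s cancel, bringing the Clifford count down from the naive $14$ to the claimed $8$.
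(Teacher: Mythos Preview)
Your approach is exactly the paper's: exhibit the standard Nielsen--Chuang circuit (the paper simply draws it as a figure) and count $7$ $T/T^\dagger$ gates plus $2$ Hadamards and $6$ $\cnot$s. One small slip to fix before you write it up: the parity identity should read $4\,x_1 x_2 x_3 = x_1 + x_2 + x_3 - (x_1{\oplus}x_2) - (x_1{\oplus}x_3) - (x_2{\oplus}x_3) + (x_1{\oplus}x_2{\oplus}x_3)$, not $8\,x_1 x_2 x_3$ (check $x_1=x_2=x_3=1$: the right-hand side is $4$), so the accumulated phase is $e^{i\pi/4\cdot 4\,x_1 x_2 x_3}=(-1)^{x_1 x_2 x_3}$, which is exactly the $CCZ$ phase as you intended.
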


\begin{proof}[Proof of Lemma~\ref{lem:toftot}]
    The circuit cmoputing a Toffoli gate with $7$ $T$-gates and $8$ Clifford gates is explicitly constructed below in Fig.~\ref{fig:toffoli_circuit}.
    \begin{figure}[H]
        \centering
        \includegraphics[width=0.7\textwidth]{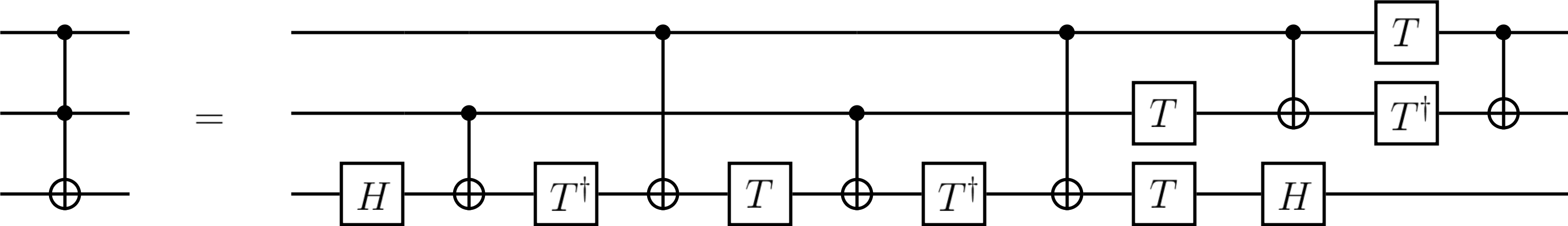}
        \caption{Explicit construction of Toffoli gate with Clifford+$T$ gates with minimum $T$-count.}
        \label{fig:toffoli_circuit}
    \end{figure}
\end{proof}

Combining Lemmas~\ref{lem:rev} and~\ref{lem:toftot}, we obtain:

\begin{Corollary}
    \label{col:sstosat}
    Given a {\rm 3-SAT} formula $\phi$ with $n$ variables and length $L$, one can efficiently construct a quantum circuit ${\cal C}_{\phi}$ consisting only of Clifford gates and at most $14L$ $T$-gates, so that
    $$\langle 0|C_{\phi}|0\rangle = \frac{|\{x\in \{0,1\}^n|\phi(x)=1\}|}{2^n}.$$
    Consequently, strong simulation of such a circuit up to accuracy $2^{-n}/2$ is as hard as determining whether $\phi$ is satisfiable.
\end{Corollary}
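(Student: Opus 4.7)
The plan is to combine the reversible construction of Lemma~\ref{lem:rev} with a Hadamard sandwich on the input register, and then replace Toffoli gates by their $T$-count-optimal decompositions from Lemma~\ref{lem:toftot}. First I would invoke Lemma~\ref{lem:rev} to obtain, in time polynomial in $L$, a reversible circuit $\mathcal{C}'_{\phi}$ that tidily computes $\phi$ using at most $2L$ Toffoli gates and $\mathrm{poly}(L)$ NOT/CNOT gates. Viewing this circuit as a quantum circuit under the identification $\{\toffoli,\cnot,\nnot\}\mapsto\{\toffoli,\cnot,X\}$, I would define $\mathcal{C}_{\phi}:=H_I^{\otimes n}\,X_B\,\mathcal{C}'_{\phi}\,H_I^{\otimes n}$, with the Hadamards acting only on the $n$ input wires and $X_B$ a single NOT on the output qubit $B$.

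The amplitude computation is then direct. Starting from $|0\rangle_I|0\rangle_A|0\rangle_B$, the first Hadamard layer produces $2^{-n/2}\sum_{x}|x\rangle_I|0\rangle_A|0\rangle_B$; the tidy circuit produces $2^{-n/2}\sum_{x}|x\rangle_I|0\rangle_A|\phi(x)\rangle_B$ by property $(i)$; the $X_B$ flips the last register to $|\lnot\phi(x)\rangle_B$; and applying the final Hadamards and taking the overlap with $|0\rangle_I|0\rangle_A|0\rangle_B$ yields
\[
\langle 0|\mathcal{C}_{\phi}|0\rangle \;=\; \frac{1}{2^{n}}\sum_{x\in\{0,1\}^{n}}[\phi(x)=1] \;=\; \frac{|\{x\in\{0,1\}^{n}:\phi(x)=1\}|}{2^{n}},
\]
as required.

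For the $T$-count, I would expand each Toffoli in $\mathcal{C}'_{\phi}$ using the $7$-$T$, $8$-Clifford decomposition of Lemma~\ref{lem:toftot}, giving at most $2L\cdot 7=14L$ $T$-gates overall; the surrounding Hadamard layers, together with all CNOT/NOT gates and the extra $X_B$, are Clifford. For the hardness consequence, note that $\langle 0|\mathcal{C}_{\phi}|0\rangle$ equals $0$ when $\phi$ is unsatisfiable and is at least $2^{-n}$ when $\phi$ is satisfiable, so any strong simulator with additive accuracy strictly less than $2^{-n}/2$ decides 3-SAT. No step is genuinely hard: the only delicate point is the bookkeeping of which register the Hadamards and the auxiliary $X$ act on, and the satisfying versus unsatisfying convention; everything else is a direct assembly of the previously established lemmas.
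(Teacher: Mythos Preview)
Your proposal is correct and follows exactly the route the paper intends: invoke Lemma~\ref{lem:rev} to get a tidy reversible circuit with at most $2L$ Toffoli gates, sandwich with Hadamards on the input register as in property~$(ii)$, and decompose each Toffoli via Lemma~\ref{lem:toftot} for a total $T$-count of $14L$. Your insertion of the single Clifford $X_B$ on the output qubit is a clean way to make the overlap with $|0\cdots 0\rangle$ pick out the \emph{satisfying} (rather than unsatisfying) assignments, a bookkeeping point the paper leaves implicit; otherwise the arguments coincide.
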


\begin{proof}[Proof of Theorem \ref{theorem:main}] Assume that Theorem \ref{theorem:main} is false. 
Then for any $\epsilon>0$ we have a quantum circuit simulator that runs in time $O(2^{{\epsilon\over 14} N})$, where $N$ is the number of $T$ gates and the size of 
the circuit is $O(N)$.
If we apply the reduction in Corollary~\ref{col:sstosat}, we can solve the 3-SAT problem in time $O(2^{{\epsilon}L})$, contradicting 
Lemma \ref{ethcons}. \end{proof}

\begin{proof}[Proof of Theorem \ref{thm:main1}] According to Corollary~\ref{col:sstosat} a 3-SAT instance of length $L$ reduces to the strong simulation problem of a linear size
quantum circuit with Clifford gates and at most $N=14L$ $T$-gates.  If there were a solution to the simulation problem with running time $O(2^{2.2451\times 10^{-8}N})$, then
any 3-SAT instance of length $L$ could be solved in time $O(2^{2.2451\times 10^{-8}\times 14L})= O(2^{3.1432\times 10^{-7}L} )$,
contradicting Lemma \ref{lem:main}. \end{proof}

\newpage
\begin{appendices}

\section{Proof of the Sparsification Lemma}

\medskip
Given a 3-SAT instance $\phi = C_1\land C_2\land\dots C_m$, we identify each clause $C_1,C_2,\dots C_m$ as a 
subset of all literals $\{x_1,\lnot x_1,x_2,\lnot x_2,\dots, x_n,\lnot x_n\}$. We start from a simple Boolean identity:
$$(a\lor b)\land (a\lor c) = a\lor (b\land c).$$
This identity implies the following Lemma. 

\begin{Lemma}\label{sunlemma}
For an arbitrary subset $\{C_1,\dots, C_{m'}\}$ of clauses of $\phi$
 and for $C := \bigcap_{i=1}^{m'} C_i$, we have $\; \phi = \phi_1\lor \phi_2$, where
\begin{align*}
       \phi_1 & = C\land C_{m'+1}\land C_{m'+2}\land\dots\land C_m,\\
    \phi_2 & = (C_1\setminus C)\land (C_2\setminus C)\land\dots\land (C_{m'}\setminus C)\land C_{m'+1}\land C_{m'+2}\land\dots\land C_m.
\end{align*}
\end{Lemma}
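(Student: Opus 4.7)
The plan is to apply the displayed Boolean identity $(a \lor b) \land (a \lor c) = a \lor (b \land c)$ inductively. First, since $C = \bigcap_{i=1}^{m'} C_i$ is contained in each $C_i$ as a set of literals, each of the first $m'$ clauses splits as a disjunction $C_i = C \lor (C_i \setminus C)$, where both $C$ and $C_i \setminus C$ are themselves disjunctions of literals.

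Next, I would prove by a direct induction on $m'$ that for any disjunctions $A_1,\dots,A_{m'}$ one has
\[
\bigwedge_{i=1}^{m'}(C \lor A_i) \;=\; C \lor \bigwedge_{i=1}^{m'} A_i.
\]
The base case $m'=1$ is trivial, and the induction step applies the given identity once with $a = C$, $b = \bigwedge_{i<m'} A_i$, and $c = A_{m'}$. Specializing to $A_i = C_i \setminus C$ gives $\bigwedge_{i=1}^{m'} C_i = C \lor \bigwedge_{i=1}^{m'}(C_i \setminus C)$.

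Finally, conjoining with the remaining clauses $C_{m'+1},\dots,C_m$ and distributing $\land$ over $\lor$ once more yields
\begin{align*}
\phi &= \Bigl[C \lor \bigwedge_{i=1}^{m'}(C_i \setminus C)\Bigr] \land \bigwedge_{i=m'+1}^{m} C_i \\
&= \Bigl[C \land \bigwedge_{i=m'+1}^{m} C_i\Bigr] \lor \Bigl[\bigwedge_{i=1}^{m'}(C_i \setminus C) \land \bigwedge_{i=m'+1}^{m} C_i\Bigr],
\end{align*}
which is exactly $\phi_1 \lor \phi_2$.

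There is no real obstacle; the only subtlety is the handling of degenerate cases, which I would mention briefly for completeness. If $C$ is empty, one reads the disjunction $C$ as $\bot$, so $\phi_1$ collapses to $\bot$ and $\phi = \phi_2$ (and indeed $C_i \setminus C = C_i$ in that case, so $\phi_2$ recovers $\phi$). If some $C_i \setminus C$ is empty, then $C_i \subseteq C$ together with $C \subseteq C_i$ forces $C_i = C$, so the corresponding factor of $\phi_2$ collapses to $\bot$ and $\phi = \phi_1$; the identity $C_i = C \lor \bot = C$ used in the first step remains valid. Neither case affects the inductive argument above.
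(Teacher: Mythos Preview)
Your proof is correct and follows exactly the route the paper intends: the paper does not spell out a proof but simply notes that the identity $(a\lor b)\land(a\lor c)=a\lor(b\land c)$ implies the lemma, and your inductive application of that identity together with one use of distributivity over the remaining clauses is precisely the intended argument. Your treatment of the degenerate empty-clause cases is a welcome addition that the paper omits.
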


Given that $\phi$ is a 3-SAT instance, both $\phi_{1}$ and $\phi_{2}$ are also 3-SAT instances. Moreover, we also have the following.

\begin{Lemma}\label{lengthlemma} Let $\phi$, $\phi_{1}$, $\phi_{2}$ be defined as in Lemma \ref{sunlemma}. Then neither of the new instances has length greater than the original:
$L(\phi) \ge L(\phi_{1}), \; L(\phi_{2})$.
\end{Lemma}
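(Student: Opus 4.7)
The plan is a direct length accounting using the formula $L(\psi) = \sum_{i} |C_{i}(\psi)| - 1$ stated in the paper, where the sum ranges over the clauses of $\psi$. Write $k_i := |C_i|$ for the size of clause $C_i$ of $\phi$, and set $k := |C|$. Since $C = \bigcap_{i=1}^{m'} C_i$, we have $C \subseteq C_i$ for $1 \le i \le m'$, which immediately gives $k \le k_i$ and $|C_i \setminus C| = k_i - k$. These two identities are the only facts I will use.

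For $\phi_1$, the clauses are $C$ together with $C_{m'+1},\dots,C_m$, so
\[
L(\phi_1) = k + \sum_{i=m'+1}^{m} k_i - 1,\qquad L(\phi) - L(\phi_1) = \sum_{i=1}^{m'} k_i - k \ge (m'-1)k \ge 0,
\]
where the inequality uses $k_i \ge k$ for each $i \le m'$. For $\phi_2$, the clauses are $(C_i\setminus C)$ for $1\le i \le m'$ together with $C_{m'+1},\dots,C_m$, so
\[
L(\phi_2) = \sum_{i=1}^{m'}(k_i - k) + \sum_{i=m'+1}^{m} k_i - 1,\qquad L(\phi) - L(\phi_2) = m' k \ge 0.
\]
Both inequalities are then immediate.

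The only real subtlety is the degenerate case where some $C_i$ equals $C$, so that $C_i\setminus C = \emptyset$ appears as a clause of $\phi_2$. An empty clause is unsatisfiable, so $\phi_2$ is trivially false in that case and may be removed from the output list without affecting the disjunction $\phi = \phi_1 \lor \phi_2$; alternatively, one can keep it, in which case the above length computation still gives $L(\phi_2) \le L(\phi)$ because an empty clause contributes $0$ to $\sum_i k_i$. Either convention preserves the inequality $L(\phi) \ge L(\phi_2)$, and similarly for $\phi_1$ when $m'=1$ and $C_1 = C$ (then $\phi_1 = \phi$ and the inequality is actually an equality). I expect no real obstacle here — the content is purely bookkeeping, and the main thing to be careful about is only treating empty clauses consistently so that the length formula $L = \sum_i k_i - 1$ remains a valid accounting of AND/OR gates.
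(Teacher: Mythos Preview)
Your proof is correct: the direct length accounting via $L(\psi)=\sum_i |C_i|-1$ together with $C\subseteq C_i$ for $i\le m'$ yields both inequalities exactly as you wrote. The paper itself does not supply a proof of this lemma at all, treating it as an immediate observation; your argument is precisely the natural verification one would give, and your handling of the degenerate empty-clause case is a reasonable extra bit of care (in the actual \textsc{Sparsify} algorithm the sunflowers satisfy $h<k$, so petals are always nonempty and the edge case does not arise).
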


\noindent{\bf Sunflowers.}  We call a collection $C_1,\dots, C_{m'}$ of clauses a $(k,h)$-\emph{sunflower} (with $h>0$) if
\begin{itemize}
    \item Each $C_i$ contains exactly $k$ literals, and
    \item $C:= \bigcap_{i=1}^{m'} C_i$ contains $h$ literals. 
\end{itemize}
$C$ is then called the \emph{heart} of the sunflower and the collection $\{C_1\setminus C, \cdots, C_{m'}\setminus C\}$ of clauses are called \emph{petals}. The algorithm for sparsification then 
keeps a set of current 3-SAT formulas whose disjunction is $\phi$. Moreover, it repeatedly replaces a formula in this set
with two formulas as long as it finds a collection of its clauses that is a large sunflower. One of these new formulas 
is obtained from the original by replacing the sunflower with its petals, while the other is obtained by replacing the sunflower with its heart. 

\medskip
\noindent{\textsc{Sparsification Algorithm}}. For 3-SAT instances, there are three kinds of sunflowers: $(2,1)$-sunflowers, $(3,2)$-sunflowers, and $(3,1)$-sunflowers. Consider the following algorithm parametrized by $\theta_1, \theta_2$: $1\le \theta_1\le \theta_2$ to be determined. Call a sunflower \emph{good} if it is a $(2,1)$- or $(3,2)$-sunflower of size at least $\theta_1$, or a $(3,1)$-sunflower of size at least $\theta_2$. Among good sunflowers, $(2,1)$-sunflowers have higher priority than $(3,2)$-sunflowers, and $(3,2)$-sunflowers have higher priority than $(3,1)$-sunflowers.
Our sparsification algorithm first creates an empty list $\ell$, which is a global variable, and then calls (once) the recursive \textsc{Sparsify} algorithm below.
\begin{algorithm}[H]
    \caption{The Algorithm \textsc{Sparsify}.}
    \label{alg:spar}
    \begin{algorithmic}[1]
    \Procedure{Sparsify}{$\phi$}
        \If {$\phi$ does not contain a good sunflower}
            \State append $\phi$ to $\ell$.
        \Else
            \State let $C_1,C_2,\dots, C_{m'}$ be a good sunflower in $\phi$ with the highest priority and let $C$ be the heart
            \State $\phi_h = $\textsc{Reduce}($C\land C_{m'+1}\land C_{m'+2}\land\dots\land C_m$)
            \State $\phi_p =$\textsc{Reduce}($(C_1\setminus C)\land (C_2\setminus C)\land\dots\land (C_{m'}\setminus C)\land C_{m'+1}\land C_{m'+2}\land\dots\land C_m$)
            \State  \textsc{Sparsify}($\phi_h$); \textsc{Sparsify}($\phi_p$)
        \EndIf
    \EndProcedure
    \end{algorithmic}
\end{algorithm}
\begin{algorithm}[H]
    \caption{The Algorithm \textsc{Reduce}.}
    \label{alg:red}
    \begin{algorithmic}[2]
    \Procedure{Reduce}{$\phi$}
        \While {$\phi$ contains two clauses $C_i$ and $C_j$, $C_i\subseteq C_j$}
            \State remove $C_j$ from $\phi$
        \EndWhile
        \State return $\phi$
    \EndProcedure
    \end{algorithmic}
\end{algorithm}
Note that the algorithm traverses through a binary recursion tree rooted at $\phi$, where each node corresponds to a 3-SAT formula. 
The set of 3-SAT formulae corresponding to leaf nodes is the collection of instances in $\ell$, which is the list we needed to construct. 
A recursive application of Lemma \ref{sunlemma} gives that $\bigvee_{\phi_{i}\in \ell} \phi_{i} = \phi$. We further prove the following.
\begin{itemize}
    \item Each leaf node corresponds to a formula of length at most $\eta(\theta_1,\theta_2)n$, where 
    \[
    \eta(\theta_1,\theta_2):= 2(\theta_1 + \theta_2);
    \]
    \item There are at most $2^{\gamma(\theta_1,\theta_2)n}$ nodes in the 
    tree, where 
    \[
    \gamma(\theta_1,\theta_2) := 4\theta_1\times H(\frac{1}{4\theta_1^2}+\frac{1}{\theta_2}); \;\; \;\;\; H(p):=-p\log_2 p-(1-p)\log (1-p).
    \]
    \item Our algorithm runs in time $O(2^{\gamma(\theta_1,\theta_2)}poly(n))$. This follows immediately from the above.
\end{itemize}

\medskip
\noindent{\bf Maximum Length of Each Leaf Node.} Given a $3$-SAT instance $\phi^{\ast}$, denote the number of $2$-clauses by $m_2$ 
and the number of $3$-clauses by $m_3$. Clearly, $L(\phi^{\ast}) = 2m_2(\phi^{\ast})+3m_3(\phi^{\ast})-1$. Let $d_j(\phi^{\ast})$ be the maximum number of clauses of size $j$ with an nonempty intersection. 
We have the following observation (by counting the total number of literals in 2-, respectively 3-, clauses):
\begin{eqnarray*}
 2 m_2(\phi^*)\leq 2n\cdot d_2(\phi^*) \\
 3  m_3(\phi^*)\leq 2n\cdot d_3(\phi^*).
 \end{eqnarray*}

For a formula $\phi^*$ on a leaf node, since there are no $(2,1)$-sunflowers of size at least $\theta_1$, we have $d_2(\phi^*)<\theta_1$. 
Similarly, $d_3(\phi^*)<\max(\theta_{1},\theta_{2}) = \theta_{2}$. Together with Lemma \ref{lengthlemma}, these give:
\begin{equation}\label{leaflength}
L(\phi^*) \; = \; 2m_2(\phi^*)+3m_3(\phi^*)-1 \; < \;  2(\theta_1 +\theta_2) m(\phi^*) \; \le \; 2(\theta_1 +\theta_2) m(\phi).
\end{equation}

\medskip
\noindent{\bf Number of Leaf Nodes.} To upper bound the number of leaf nodes, we need the notion of \emph{immigrant clauses}. For any formula on some node of the recursion tree, call a clause \emph{immigrant} if that clause is not present in the root. For any path from the root to a leaf, all immigrant clauses that newly appear are distinct (i.e. it cannot happen
that an immigrant clause disappears and then reappears later). This leads to the following observation.

\begin{Observation}
    \textsc{Reduce} only happens when a newly introduced immigrant clause is contained in previously present clauses.
\end{Observation}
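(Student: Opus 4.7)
The plan is to prove the Observation by induction on the depth of a node in the recursion tree, maintaining the invariant that the formula $\phi^*$ sitting at that node is \emph{reduced}, in the sense that no two distinct clauses $C_a, C_b \in \phi^*$ satisfy $C_a \subsetneq C_b$ (and in particular $\phi^*$ contains no duplicate clauses). For every non-root node this is automatic, since such a $\phi^*$ is obtained as the output of a \textsc{Reduce} call; for the root we may assume it without loss of generality by prepending a single preprocessing invocation of \textsc{Reduce} to the input.

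Now fix an internal node carrying a reduced formula $\phi = C_1 \wedge \dots \wedge C_m$, and suppose \textsc{Sparsify} selects a good sunflower $C_1,\dots,C_{m'}$ with heart $C$. I first analyse the call \textsc{Reduce}$(\phi_h)$; the case of $\phi_p$ is parallel. Just before the while loop fires, $\phi_h = \{C\} \cup \{C_{m'+1},\dots,C_m\}$, with $C$ the unique newly introduced clause and the remaining clauses previously present. Suppose some iteration identifies a pair $C_i \subsetneq C_j$. If both clauses are previously present, they already formed a containment pair in $\phi$, contradicting reducedness. If $C_j = C$ and $C_i$ is previously present, then since $C = \bigcap_{k \le m'} C_k \subseteq C_1$ we obtain $C_i \subsetneq C \subseteq C_1$, giving a containment pair between the distinct previously present clauses $C_i$ (with $i > m'$) and $C_1$ in $\phi$, again contradicting reducedness. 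Hence $C_i = C$ and $C_j \in \{C_{m'+1},\dots,C_m\}$, which is exactly what the Observation asserts; each removal preserves reducedness of the current state, so subsequent iterations of the loop fall under the same case analysis.

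For \textsc{Reduce}$(\phi_p)$ the newly introduced clauses are the petals $C_k \setminus C$ for $1 \le k \le m'$, and the two arguments above rule out both-previously-present pairs and previously-present-inside-new pairs (reading $C_k \setminus C \subseteq C_k$ in place of $C \subseteq C_1$). The only genuinely new subcase is two petals $C_i \setminus C \subseteq C_j \setminus C$ with $i \neq j$; but within a $(k,h)$-sunflower every petal has cardinality exactly $k-h$, so the containment is forced to be an equality, whence $C_i = C_j$, contradicting distinctness of clauses in the reduced $\phi$. The main obstacle in the proof is managing these "new contains old" and "new contains new" subcases, and in each instance the fix is to push the hypothetical containment back into $\phi$ via $C \subseteq C_k$ or $C_k \setminus C \subseteq C_k$ and invoke the reducedness invariant on the parent node.
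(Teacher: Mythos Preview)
Your argument is correct. In the paper this Observation is stated without proof; the only justification offered is the preceding sentence that immigrant clauses cannot disappear and later reappear along a root-to-leaf path. The implicit reason is exactly the invariant you make explicit: every non-root node is the output of \textsc{Reduce} and hence carries a reduced formula (and the root may be assumed reduced), so no two previously-present clauses can stand in a containment relation, forcing any pair $C_i\subseteq C_j$ that triggers the while loop to involve a new clause. Your case analysis ruling out ``old $\subseteq$ new'' and ``petal $\subseteq$ petal'' by pushing the hypothetical containment back into the reduced parent via $C\subseteq C_1$ or $C_k\setminus C\subseteq C_k$ is the natural way to finish, and is more careful than anything the paper writes down.

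One small point: you show that the smaller clause $C_i$ is \emph{newly introduced} at the current step but do not verify that it is \emph{immigrant}, i.e.\ absent from the root. This is easy to add. Every heart or petal has size at most $2$; since the root may be assumed to contain no $1$-clauses, a newly introduced $1$-clause is automatically immigrant. A newly introduced $2$-clause arises only as the heart of a $(3,2)$-sunflower or as a petal of a $(3,1)$-sunflower, and in either case the surrounding $3$-clauses of the sunflower are necessarily root clauses (no $3$-clause is ever created during the recursion), so the $2$-clause being present in the root would contradict reducedness of the root.
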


The high-level idea of the proof is as follows: we show that there are at most a linear number of immigrant clauses ever introduced. Since in each round at least one immigrant 
clause is introduced, the recursion tree has linear depth. Moreover, many immigrant clauses are created whenever the petals of a sunflower are taken, and so there must be few such steps, 
further restricting the number of leaves.

Let $r_2(\phi^*)$ be the maximum number of immigrant $2$-clauses with nonempty intersection. Clearly $r_2(\phi^*)\leq d_2(\phi^*)$. 
The following holds for every node in the recursion tree.

\begin{Lemma}
    For every formula $\phi^*$ in the recursion tree, $r_2(\phi^*)\leq 2\theta_{1}-1$.
\end{Lemma}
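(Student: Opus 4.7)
The plan is to prove the bound by induction along root-to-node paths in the recursion tree. The root formula $\phi$ contains no immigrant clauses, so $r_2(\phi) = 0$ trivially satisfies $r_2 \leq 2\theta_1 - 1$. For the inductive step, I would consider how processing a single sunflower transforms the parent formula $\phi^*$ into either of its two children $\phi_h$ or $\phi_p$, and track how $r_2$ changes from parent to child.

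The key observation driving the analysis is that the algorithm's priority rule guarantees that, whenever a sunflower of lower priority is processed, no sunflower of higher priority is present in the current formula. In particular, immediately before a $(3,2)$- or $(3,1)$-sunflower is chosen, no $(2,1)$-sunflower of size at least $\theta_1$ exists, so every literal is contained in fewer than $\theta_1$ 2-clauses. Since immigrant 2-clauses form a subset of all 2-clauses, this yields $r_2(\phi^*) \leq d_2(\phi^*) < \theta_1$ at such a moment. This fact is the crucial ``starting point'' bound to be combined with the number of newly created immigrant 2-clauses.

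I would then do a case analysis on the sunflower type processed at $\phi^*$. Processing a $(2,1)$-sunflower introduces only 1-clauses (either the heart $\{l\}$ or the petals $\{l_i'\}$), so $r_2$ cannot increase; the same holds for the petal branch of a $(3,2)$-sunflower and the heart branch of a $(3,1)$-sunflower. Processing a $(3,2)$-sunflower in its heart branch adds a single new 2-clause $\{l_1,l_2\}$, raising $r_2$ by at most one, giving $r_2(\phi_h) \leq \theta_1 \leq 2\theta_1 - 1$. The hard case, and what I expect to be the main obstacle, is the petal branch of a $(3,1)$-sunflower $C_1,\dots,C_{m'}$ with heart $\{l\}$, since this simultaneously introduces many new 2-clauses $C_i \setminus \{l\}$. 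The critical sub-claim is that for any literal $l^* \neq l$, at most $\theta_1 - 1$ of these new petals contain $l^*$: indeed, such petals correspond to original clauses $C_i \supseteq \{l, l^*\}$, and $\theta_1$ or more such $C_i$'s would themselves form a $(3,2)$-sunflower of size at least $\theta_1$ with heart $\{l, l^*\}$, contradicting the priority rule that placed $(3,2)$-sunflowers strictly ahead of $(3,1)$-sunflowers. Combined with the prior bound $r_2(\phi^*) < \theta_1$, the number of immigrant 2-clauses of $\phi_p$ sharing any common literal is at most $(\theta_1 - 1) + (\theta_1 - 1) = 2\theta_1 - 2 \leq 2\theta_1 - 1$, closing the induction.
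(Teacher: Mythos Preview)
Your proposal is correct and follows essentially the same route as the paper: induction down the recursion tree, with the key input that the priority rule forces $d_2 < \theta_1$ at any parent processing a $(3,2)$- or $(3,1)$-sunflower, and with the $(3,1)$-petal case handled by observing that $\theta_1$ or more petals through a common literal $l^*$ would constitute a higher-priority $(3,2)$-sunflower in the parent. Your write-up is slightly more explicit than the paper's in enumerating the cases that introduce no new 2-clauses, and you obtain the marginally sharper bound $2\theta_1 - 2$ in the $(3,1)$-petal case, but the argument is the same.
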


\begin{proof}
    The proof follows by induction from top to bottom. For the root, $\phi$, 
we have $r_2(\phi) = 0$. 
Next consider a non-top node $v$ on which a new 
immigrant $2$-clause is created, and the corresponding formula $\phi^*$. There are two cases to consider.
    \begin{itemize}
        \item $\phi^*$ takes the heart of a $(3,2)$-sunflower from its parent $\phi'$. Since $\phi'$ does not have a $(2,1)$-sunflower of size $\theta_1$, $d_2(\phi')\le \theta_1 - 1$. 
        By only adding one new $2$-clause, we have that
        \[
        r_2(\phi^*)\leq d_2(\phi^*)\leq d_2(\phi')+1 \le \theta_1 \le 2\theta_{1}-1.
        \]
        \item $\phi^*$ takes the petals of a $(3,1)$-sunflower from its parent $\phi'$. Similar to the former case, $d_2(\phi')\le \theta_1 - 1$. Assume that $r_2(\phi^*)\geq 2\theta_1$. Then there exists a literal $y$ which appeared in at least $\theta_1+1$ of the newly-formed petals. However, this is not possible as there would be a $(3,2)$-sunflower of size at least $\theta_1+1$ in $\phi'$, and the algorithm would choose that sunflower instead of a $(3,1)$-sunflower.
    \end{itemize}
\end{proof}

This leads to the following observation.

\begin{Observation}
    An immigrant clause of size one can only reduce at most $2\theta_{1}-1$ immigrant clauses of size two.
\end{Observation}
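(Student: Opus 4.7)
The plan is to reduce the claim to a direct application of the previous lemma by identifying the structural reason a size-one immigrant clause can trigger a reduction of a size-two clause at all. I would begin by fixing an arbitrary node in the recursion tree at which a new immigrant clause of size one, say $\{y\}$ for some literal $y$, first appears. Recall that \textsc{Reduce} deletes a clause $C_j$ precisely when there is another clause $C_i\subseteq C_j$ still present. Hence, the size-two immigrant clauses that this particular unit clause can be responsible for removing are exactly those of the form $\{y,z\}$ for some literal $z$, since a size-two clause contains $\{y\}$ if and only if $y$ itself is one of its two literals.

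Next I would exploit the fact that all such size-two clauses share the literal $y$: their pairwise intersections, and indeed their total intersection, contain $y$ and are therefore nonempty. So the set of immigrant $2$-clauses that could be reduced by $\{y\}$ forms a family of immigrant $2$-clauses with a nonempty common intersection. By the observation that immigrant clauses on a root-to-leaf path never disappear and reappear, all of these $2$-clauses are simultaneously present in the current formula $\phi^*$ at the moment \textsc{Reduce} would act on them. Consequently the number of such clauses is at most $r_2(\phi^*)$.

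Finally, I would invoke the previously proved bound $r_2(\phi^*)\le 2\theta_1-1$ to conclude that at most $2\theta_1-1$ immigrant size-two clauses can be eliminated through $\{y\}$. The main (and essentially only) subtlety is checking that the hypothesis of the preceding lemma genuinely applies here, namely that the relevant $2$-clauses are indeed immigrant and coexist in one formula; both are immediate from the non-reappearance property of immigrant clauses and from the fact that $\{y\}$ must already have been added when the reductions occur. No further calculation is needed.
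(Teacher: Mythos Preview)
Your argument is correct and matches the paper's reasoning: the paper simply states that the observation follows from the preceding lemma bounding $r_2(\phi^*)\le 2\theta_1-1$, and your proposal is precisely the natural unpacking of that implication. One small point worth tightening is that the lemma is stated for nodes $\phi^*$ of the recursion tree (i.e., post-\textsc{Reduce} formulas), whereas you apply it to the pre-\textsc{Reduce} formula; this is harmless because whenever a new $1$-clause is introduced no new immigrant $2$-clauses are created in the same step, so the relevant count is already witnessed in the parent node.
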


There are at most $n$ immigrant $1$-clauses introduced 
(literals corresponding to a single variable can be immigrant at most once), and so there are at most $(2\theta_1-1) n$ immigrant $2$-clauses reduced by them (because in each reduction at most 
$2\theta_{1}-1$ immigrant clauses are eliminated). 
When the algorithm arrives at a leaf, $\phi^*$, because $r_2(\phi^*)\leq 2\theta_{1}-1$, we have that the total number of immigrant $2$-clauses that remains is at most 
${(2\theta_{1}-1) 2n\over 2} < (2\theta_{1}-1)n$. Thus the number of immigrant $2$-clauses ever introduced is at most $(4\theta_1-2) n$. In each step going down in the recursion tree
at least one new immigrant one- or two- clause was created, and so depth of the recursion tree is at most $(4\theta_1-2) n + n < 4\theta_1 n$.
This alone would not be sufficient to get a good estimate on the number number of leaves, but we further observe the following.

Each time the petals of a sunflower are taken, either at least $\theta_1$ $1$-clauses are introduced, or at least $\theta_2$ $2$-clauses are introduced. 
Therefore, the number of petals taken along a path
from the root to a leaf is at most $\frac{n}{\theta_1}+\frac{4\theta_1n}{\theta_2}$. This gives the bound
\begin{equation}\label{leafeq}
\sum_{i=0}^{\frac{n}{\theta_1}+\frac{4\theta_1n}{\theta_2}}\binom{4\theta_1n}{i}\leq 2^{\gamma(\theta_1,\theta_2)n},
\end{equation}
on the number of leafs,
where $\gamma(\theta_1,\theta_2)\leq 4\theta_1H(\frac{1}{4\theta_1^2}+\frac{1}{\theta_2}).$

\medskip

\noindent {\bf Optimization.} First, note that Equation \ref{leafeq} and $H(p) /  p(1+ \log_2 {1\over p}) \longrightarrow 1$ at $p=0$
given that $\gamma(\theta_1,\theta_2)$ can be arbitrarily small at $\theta_2 = 4 \theta_{1}^2$ and for $\theta_{1}$ sufficiently large.
This, together with Equation \ref{leaflength}, gives the Sparsification Lemma.

\medskip

Next, we compute the values for $\theta_{1}$ and $\theta_{2}$ that optimize the hardness reductions from instances whose size parameters are expressed in terms of $n$ 
to instances whose size parameters are expressed in terms of $L$.

\section{Proof of Lemma~\ref{ethcons}}

\begin{proof}
    Suppose that for each $a>0$, contradictory to Lemma~\ref{ethcons}, there is an algorithm $\textsc{Solve}_a$ that solves 3-SAT in time $O(2^{aL})$. 
    We show that the existence of such a family of algorithms implies 
    the existence of a family of algorithms that solve $3$-$SAT$ in time $O(2^{\epsilon n})$ for every $\epsilon > 0$, 
    contradicting the ETH.
    Given $\epsilon > 0$, set $\epsilon' = \epsilon/2$. Consider the following algorithm.
    \begin{enumerate}
        \item Given a $3$-$SAT$ instance $\phi$ over $n$ variables, run the 
        \textsc{Sparsification Algorithm} and get $k$ $3$-$SAT$ instances $\phi_1,\dots, \phi_k$, where $k\leq 2^{\epsilon' n}$, each of length at most $c(\epsilon')n$.
        \item Solve every instance $\phi_1,\dots, \phi_k$ using the algorithm $\textsc{Solve}_{\epsilon'/c(\epsilon')n}$.
        \item If any of the $\phi_i$ are satisfiable, output $1$; otherwise output $0$.
    \end{enumerate}
    By the Sparsification Lemma, Step 1 takes time $2^{\epsilon' n}poly(n)$ time. Solving 
    an instance in Step 2 takes time $2^{{\epsilon' \over c(\epsilon')} c(\epsilon')n }$ since $L \le c(\epsilon')n$. 
    The total running time of Step 2 is then $2^{(\epsilon' + \epsilon')n} = 2^{\epsilon n}$.
    Finally, Step 3 combines the results from Step 2, and so takes time $O(2^{\epsilon' n})$. The overall running time is then
    dominated by $2^{\epsilon n}$.
\end{proof}

\section{Proof of Lemma~\ref{lem:main}}
The proof of Lemma~\ref{lem:main} is very similar to the proof of Lemma~\ref{ethcons}, except that we now have to calculate the explicit constants.
 Assume we had a $3$-SAT solver \textsc{Solve} that runs in time $o(2^{3.1432\times 10^{-7}L})$ 
 on instances of length $L$. From it, we construct a $3$-SAT 
 solver  \textsc{Solve}' that runs in time $o(1.3^n)$, beating the current best $3$-SAT solver.
Let  $\theta_1=109.395$ and $\theta_2=58367.2$ . \textsc{Solve}' will then
    \begin{itemize}
        \item Run the \textsc{Sparsification Algorithm} on input $\phi$ to get a list $\ell$ of $2^{\gamma(\theta_1,\theta_2) n}$ sparse instances
        in time $O(2^{\gamma(\theta_1,\theta_2)n}poly(n))$, each with length $\eta(\theta_1,\theta_2)n$. We have $\bigvee_{\phi_{i}\in \ell} \phi_{i} = \phi$.
        \item Use  \textsc{Solve} to solve each instance $\phi_{i}$ in time $o(2^{3.1432\times 10^{-7}\eta(\theta_1,\theta_2)n})$. The total running time is then less than
        \[
         2^{3.1432\times 10^{-7}(\eta(\theta_1,\theta_2)+\gamma(\theta_1,\theta_2))n}.
        \]
        \item If any of the instances are satisfiable, it outputs $1$, otherwise it outputs $0$. This step takes time $O(2^{\gamma(\theta_1,\theta_2)n})$.
    \end{itemize}
    The dominating term in the running time is then $2^{3.1432\times 10^{-7}(\eta(\theta_1,\theta_2)+\gamma(\theta_1,\theta_2))n}$. An easy calculation shows that
     \textsc{Solve}' runs in $o(1.3^n)$, beating the current best bound.

\end{appendices}
\bibliographystyle{alpha}
\bibliography{biblio}

\newcommand{\etalchar}[1]{$^{#1}$}
\begin{thebibliography}{BBC{\etalchar{+}}18}

\bibitem[BBC{\etalchar{+}}18]{bravyi2018simulation}
Sergey Bravyi, Dan Browne, Padraic Calpin, Earl Campbell, David Gosset, and
  Mark Howard.
\newblock Simulation of quantum circuits by low-rank stabilizer decompositions.
\newblock {\em arXiv preprint arXiv:1808.00128}, 2018.

\bibitem[{Ben}73]{bennett1973logical}
{Bennett, Charles H}.
\newblock {Logical reversibility of computation}.
\newblock {\em {IBM Journal of Research and Development}}, 17(6):525--532,
  1973.

\bibitem[BG16]{bravyi2016improved}
Sergey Bravyi and David Gosset.
\newblock {Improved classical simulation of quantum circuits dominated by
  Clifford gates}.
\newblock {\em {Physical Review Letters}}, 116(25):250501, 2016.

\bibitem[Her14]{hertli20143}
Timon Hertli.
\newblock {3-SAT Faster and Simpler---Unique-SAT Bounds for PPSZ Hold in
  General}.
\newblock {\em {SIAM Journal on Computing}}, 43(2):718--729, 2014.

\bibitem[IPZ01]{impagliazzo2001problems}
Russell Impagliazzo, Ramamohan Paturi, and Francis Zane.
\newblock Which problems have strongly exponential complexity?
\newblock {\em Journal of Computer and System Sciences}, 63(4):512--530, 2001.

\bibitem[MT19]{morimae2019fine}
Tomoyuki Morimae and Suguru Tamaki.
\newblock Fine-grained quantum computational supremacy.
\newblock {\em arXiv preprint arXiv:1901.01637}, 2019.

\bibitem[NC02]{nielsen2002quantum}
Michael~A Nielsen and Isaac Chuang.
\newblock Quantum computation and quantum information, 2002.

\bibitem[PPSZ05]{paturi2005improved}
Ramamohan Paturi, Pavel Pudl{\'a}k, Michael~E Saks, and Francis Zane.
\newblock {An improved exponential-time algorithm for k-SAT}.
\newblock {\em {Journal of the ACM (JACM)}}, 52(3):337--364, 2005.

\bibitem[PPZ97]{paturi1997satisfiability}
Ramamohan Paturi, Pavel Pudl{\'a}k, and Francis Zane.
\newblock {Satisfiability coding lemma}.
\newblock In {\em {Foundations of Computer Science, 1997. Proceedings., 38th
  Annual Symposium}}, pages 566--574. IEEE, 1997.

\bibitem[Sch99]{schoning1999probabilistic}
T~Schoning.
\newblock {A probabilistic algorithm for k-SAT and constraint satisfaction
  problems}.
\newblock In {\em {Foundations of Computer Science, 1999. 40th Annual
  Symposium}}, pages 410--414. {IEEE}, 1999.

\end{thebibliography}

\end{document}